\newcommand{\ket}[1]{\ensuremath{|#1\rangle}}
\newcommand{\bra}[1]{\ensuremath{\langle #1|}}
\newcommand{\proj}[1]{\ket{#1}\bra{#1}}
\newcommand{\be}{\begin{equation}}
\newcommand{\ee}{\end{equation}}
\newcommand{\ba}{\begin{eqnarray}}
\newcommand{\ea}{\end{eqnarray}}
\newcommand{\tr}[1]{{\rm tr}\left(#1\right)}
\newcommand{\bip}{\mathbb{C}^d\otimes\mathbb{C}^d}
\newcommand{\bipstate}{\mathcal{L}(\mathbb{C}^d\otimes\mathbb{C}^d)}
\newcommand{\norm}[1]{||#1||}
\newcommand{\sbr}[1]{\left[ #1 \right]}
\newcommand{\rIso}{\rho_{\rm iso}}
\newcommand{\id}{\mathbb{I}}
\newcommand{\sys}{{\bf S}}
\newcommand{\env}{{\bf E}}
\newcommand{\bt}{{\bf B}_T}
\newcommand{\obs}{{\bf O}}
\newcommand{\pro}{\mathcal{P}}
\newcommand{\act}{\mathcal{A}}
\newcommand{\F}[1]{\mathcal{F}(#1)}
\newcommand{\Sab}[1]{S(A|B)_{#1}}
\newcommand{\Sba}[1]{S(B|A)_{#1}}
\newcommand{\Wm}[1]{W_{\rm Er}(#1)}
\newcommand{\Wt}[1]{W_{\rm Total}(#1)}
\newtheorem{theorem}{Theorem}[section]
\newtheorem{lemma}[theorem]{Lemma}
\newtheorem{alemma}{Lemma}[section]
\newtheorem{aproposition}[alemma]{Proposition}
\newtheorem{question}{Question}
\newcommand{\update}{\color{black}}
\newcommand{\updatetwo}{\color{black}}
\newcommand{\updatethree}{\color{black}}
\newcommand{\updatefour}{\color{black}}
\newcommand{\updatefive}{\color{black}}
\newcommand{\updatesix}{\color{black}}
\newcommand{\updatesvn}{\color{black}}
\definecolor{nred}{rgb}{0.9,0.1,0.1}
\definecolor{nblack}{rgb}{0,0,0}
\definecolor{nblue}{rgb}{0.2,0.2,0.8}
\definecolor{ngreen}{rgb}{0.2,0.6,0.2}
\definecolor{ublue}{rgb}{0,0,0.5}
\begin{document}
\title{Work extraction and fully entangled fraction}

\author{Chung-Yun Hsieh}
\email{s103022502@m103.nthu.edu.tw}
\affiliation{Department of Physics, National Tsing Hua University, Hsinchu 300, Taiwan}
\author{Ray-Kuang Lee}
\affiliation{Department of Physics, National Tsing Hua University, Hsinchu 300, Taiwan}
\affiliation{Physics Division, National Center for Theoretical Sciences, Hsinchu 300, Taiwan}

\date{\today}

\begin{abstract}
For a bipartite state with equal local dimension $d$, we prove that one can obtain work gain under Landauer's erasure process on one party in identically and independently distributed (iid) limit when the corresponding fully entangled fraction is larger than $\frac{1}{d}$.
By processing a given state to the maximally mixed state, we give an approximation for the largest extractable work with an error in the energy scale, which becomes negligible in the large system limit.
As a step to link quantum thermodynamics and quantum nonlocality, we also provide a simple picture to approximate the optimal work extraction and suggest a potential thermodynamic interpretation of fully entangled fraction for isotropic states.
\end{abstract}

\maketitle

\section{Introduction}
Quantum thermodynamics and quantum nonlocality share similar spirit to tell quantum and classical regimes apart.
For quantum nonlocality, we have the famous Einstein-Podolsky-Rosen (EPR) paradox~\cite{EPR} and Bell's inequality~\cite{Bell} to illustrate the bizarre nature of quantum theory.
On the other hand, for quantum thermodynamics, a multitude of intriguing phenomena related to various definitions of {\em work} has been addressed~\cite{Vinjanampathy2016}.
Examples such as Landauer's principle~\cite{Landauer1961,Alicki2004}, different scenarios on work extraction~\cite{Horodecki2013,Aberg2013,Skrzypczyk2014}, the minimal work cost of running different processes such as a completely-positive trace-preserving map (CPTPM)~\cite{QCI-text, Faist2015}, or an erasure process~\cite{del_Rio2011}, and other related results~\cite{Acin2015, Acin2013, Dahlsten2011,Funo2013,Oppenheim2002,Paraan2009,Fusco2016} simply certify the importance in this direction.

Even though the awareness of the relations between work and entanglement~\cite{Hovhannisyan2013}, coherence~\cite{Korzekwa2016,Misra2016}, or other quantum correlations~\cite{Acin2015,Oppenheim2002,Giorgy2015} have already been studied in the literature, a quantitative connection to nonlocal properties such as quantum nonlocality~\cite{RMP-Bell}, quantum steerability~\cite{Schrodinger,Wiseman2007}, and the usefulness of teleportation~\cite{Bennett1993,Popescu1994,Horodecki1999,Albeverio2002} still remains as an open question deserving further study.
To this end, we adapt a well-known quantity called {\em fully entangled fraction} (FEF)~\cite{Horodecki1999, Albeverio2002}, which acts as a useful measurement in characterizing different nonlocal properties~\cite{RMP-Bell, Zhao2010, Cavalcanti2013, Hsieh2016, Quintino2016}.
In this work, we try to bridge quantum thermodynamics and quantum nonlocality together, by relating work gain under different processes to FEF.

With the help of FEF, for a given bipartite quantum state $\rho$ acting on $\bip$, we derive an inequality to conclude that in identically and independently distributed (iid) limit, it is always possible to have work gain (i.e. negative work cost) of an erasure process on one party of $\rho$, provided that $\rho$ has FEF larger than $\frac{1}{d}$.
This result gives an alternative interpretation for every state useful for teleportation~\cite{Bennett1993, Horodecki1999}, which is equivalent to enabling work gain under local erasure process.
For a specified temperature $T$, we define work extraction to be a process mapping the initial state with fully degenerate system Hamiltonian to the final Gibbs state in $T$ with the Hamiltonian equal to the initial one. 
With some prerequisites, we obtain an approximation for the largest extractable work, up to an error in the energy scale which can be dropped when the system is large enough.
This approximation not only provides a clear strategy for work extraction, which is consisting of two local operations and classical communications (LOCCs) plus one local erasure process, but also suggests us to interpret FEF for isotropic states~\cite{Horodecki1999} as a concept equivalent to the minimal work cost of erasure process on one party.

This paper is structured as follows.
The notations used and the underline scenario to define erasure process and work extraction are described in Sec.~\ref{Sec:Formalism}.
Then, in Sec.~\ref{Sec:Erasure}, the first main result (Theorem~\ref{Result:Work_gain}) is illustrated and proved, which gives an inequality linking FEF and work gain of local erasure process.
The saturation for this inequality is discussed in Sec.~\ref{Sec:Extraction}, as a corollary of the approximation of the optimal extractable work (Theorem~\ref{Result:Estimate}).
Further more, we apply this result to isotropic states and arbitrary pure states in the remaining part of Sec.~\ref{Sec:Extraction}, where the former suggests us a thermodynamic interpretation of FEF, and the latter proposes a nearly-local protocol to extract the maximal work of an arbitrary pure state.
Finally, Sec.~\ref{sec-conclusion} gives the conclusion.
{
\section{PRELIMINARY NOTIONS}\label{Sec:Formalism}
{\updatesvn
Through out this paper, we will only consider quantum states $\rho$ acting on a bipartite system, $\bip$, with equal local dimension $d=2^l$ for some natural number $l\in\mathbb{N}$.
For such a system, we call the first party ``Alice'' and the second party ``Bob''.
The collection of all quantum states acting on this bipartite system is denoted as $\bipstate$.
The {\em conditional von Neumann entropies} (see, for example, Ref.~\cite{RennerPhD}) of a given state $\rho\in\bipstate$ are:
\begin{eqnarray}
&&\Sab{\rho}\coloneqq S(\rho) - S(\rho_{\rm B});\nonumber\\
&&\Sba{\rho}\coloneqq S(\rho) - S(\rho_{\rm A});
\end{eqnarray}
where $S(\rho)\coloneqq - \tr{\rho\log_2{\rho}}$ is the {\em von Neumann entropy} of the quantum state $\rho$, and $\rho_{\rm A}\coloneqq{\rm tr}_{\rm B}(\rho)$ [$\rho_{\rm B}\coloneqq{\rm tr}_{\rm A}(\rho)$] is the local state on Alice's (Bob's) side. 

The {\em fully entangled fraction} (FEF)~\cite{Horodecki1999,Albeverio2002} of a given state $\rho\in\bipstate$ is defined by
\begin{eqnarray}
\F{\rho}\coloneqq\max_{\ket{\Psi}}\bra{\Psi}\rho\ket{\Psi},
\end{eqnarray}
where the maximization is taken over all maximally entangled states $\ket{\Psi}\in \bip$.
It is well known that FEF has been successfully applied to characterize different nonlocal properties.
For instance, a quantum state $\rho\in\bipstate$ is useful for the standard teleportation if and only if $\F{\rho}>\frac{1}{d}$~\cite{Horodecki1999}; while $\rho$ is $k$-copy nonlocal (and hence $k$-copy steerable) if $\F{\rho}>\frac{1}{d}$~\cite{Cavalcanti2013}.
Sufficient conditions for quantum steerability can also be given by FEF~\cite{Hsieh2016, Quintino2016}.
Therefore, if one is able to link FEF with quantum thermodynamics, it can be a good starting point for bridging quantum nonlocality and quantum thermodynamics.

For a given state $\rho\in\bipstate$, we want to consider extractable work from the {\em information content} of $\rho$.
More precisely, we tend to define work extraction by asking {\em how much work can be extracted from $\rho$ by processing it to the maximally mixed state $\frac{\id}{d^2}$?}
The reason  to adapt this definition is because we expect the maximally mixed state to contain the {\em least} useful information content.
Therefore processing $\rho$ to it should be a natural way to obtain the optimal amount of work from the useful information possessed by $\rho$.
{
To study work extraction and other processes, rigorously, we need to specify our framework, which is inspired by Ref.~\cite{del_Rio2011}.
Let us consider the system $\sys=\bip$ couples to an environment $\env$, so that $\sys\otimes\env$ is a closed system with only unitary time evolution.
We require the initial system Hamiltonian to be fully degenerate.
This means any computational basis will be energy eigenbasis; in particular, we are allowed to choose the computational basis in the form $\{\ket{n_1}\otimes\ket{n_2}\otimes ...\otimes\ket{n_l}\,|\,n_1,n_2,...,n_l\in\{0,1\}\}$, where $\{\ket{0},\ket{1}\}$ is a single qubit computational basis.
The environment is consisting of two parts: a heat bath $\bt$ in a finite and positive temperature $T$, and an observer $\obs$.
We further assume the observer possesses a subsystem, called a {\em battery}, which is assumed to be a perfect energy storage to store or withdraw energy without loss. 
The battery is also assumed to be able to generate energy in the form of work.
For example, it can be a collection of work qubits~\cite{Horodecki2013} or a suspended weight that will be raised or lowered~\cite{Skrzypczyk2014,Dahlsten2011}.
In this sense, if energy is extracted from the system by the observer and stored in the battery, one can output the same amount of work from the battery.
Theoretically and ideally, it is in this sense for us to talk about ``work''.

With the above setting, we can now consider the process that may help us to draw work from the system.
For this purpose, we introduce the {\em allowed actions} for the observer $\obs$ to apply, which amounts to observer's abilities to manipulate the system:

(1) {\em Raising/Lowering energy level}. 
This action allows the observer to raise or lower the energy level of the system (e.g. by adjusting the magnetic field), which is a crucial action for work extraction. 
If the system in an energy level $n$ and the observer changes the corresponding energy from $E_n$ to $E'_n$, we define the {\em work gain} (for the observer) to be $E'_n - E_n$~\cite{del_Rio2011}.
Note that we call it work gain based on the assumption of the existence of the ideal battery.
This definition can be generalized to arbitrary states $\rho\in\bipstate$ by ${\rm tr}\sbr{\rho(H'-H)}$, where $H$ is the initial system Hamiltonian, and $H'$ is the final Hamiltonian after this action, i.e., raising/lowering manipulation.
We use $\act_{\rm R/L}$ to denote the set of all actions of this type.
 
 (2) {\em Thermalization.}
 Within this action, the observer will make (part of) the system couple to the heat bath $\bt$ to achieve thermalization, and then the system will decouple from $\bt$ in the end of this action. 
 This amounts to some CPTPMs on the system, which is mainly used to make the system go to the Gibbs state.
 Although a CPTPM does need work cost~\cite{Faist2015}, we assume that this cost is covered by the heat bath rather than the observer.
 Hence, we assume no energy change of the observer in this action.
 We use $\act_T$ to denote the set of all actions of this type with temperature $T$.
 
 (3) {\em Unitary operations}.
 This action includes mappings in the form $\rho\mapsto\int_{U(d^2)}{U\rho U^\dagger P_UdU}$ for some probability density function $P_U$ over the unitary group $U(d^2)$.
 Also, we allow the observer to select specific results and abandon the rest, resulting in non-zero failure probability.
 Note that unlike the actions of thermalization, this action needs work cost from the observer~\cite{Faist2015}.
 In case of negative work cost, we define it to be work gain (upon success) given the existence of the ideal battery.
 We use $\act_{U}$ to denote the set of all actions of this type.

Apart from the actions of unitary operations, we also allow the following source of probabilistic nature, which is a special action for observer:

(4) {\em $\delta$-approximation}. 
Within this action, the observer is allowed to approximate the original input state by another state which is {\em $\delta$-closed} to it (a state $\rho$ is {\em $\delta$-closed} to another state $\sigma$ if their trace distance~\cite{QCI-text} is upper bounded by $\delta$, i.e., $\frac{1}{2}\norm{\rho-\sigma}_{1}\le\delta$).
This {\em conceptual} action has no real effect on the state,  and it is introduced so that  in our framework one can define allowed processes rigorously and completely.
Note that this step will result in nonzero failure probability.
We use $\act_{\delta}$ to denote the set of all actions with $\delta$-precision of this type.

Now, we are in position to define process in our framework: for a given state $\rho\in\bipstate$, we define an {\em allowed process} for $\rho$, denoted by $\pro_\rho$, to be a finite sequence of actions $\pro_\rho =\{A_n\}_{n=1}^N\subset \act_{\rm R/L}\cup\act_{T}\cup\act_{U}\cup\act_\delta$ such that its behavior on the system can be mathematically written as $A_N\circ A_{N-1}\circ...\circ A_1$, where $A_1$ is acting on $\rho$ with the initial system Hamiltonian $H$ (fully degenerate), and $A_N$ will output a final state with final system Hamiltonian equal to the original one, i.e. $H$.
Note that due to this definition, the observer can apply one and only one action at one time.
To illustrate our idea, let us consider the proof of Theorem I.1 in Ref.~\cite{del_Rio2011} as an example. 
In their proof, in order to construct an erasure process on the given state $\rho$, they firstly choose a subsystem which is $\delta$-closed to a maximally entangled pure state $\ket{\Psi}$, and then they apply work extraction ``for $\ket{\Psi}$'' on the given system.
The work extraction they used for $\ket{\Psi}$ is consisting of actions in $\act_{R/L}\cup\act_{T}$ only (see Fig.~4 in Ref.~\cite{del_Rio2011}).
In our framework, such process can therefore be written as $\pro_\rho = \{ A_n\}_{n=1}^N$, where $A_1\in\act_{\delta}$ and $\{A_n\}_{n=2}^N \subset \act_{R/L}\cup\act_{T}$.
As mentioned previously, action of $\delta$-approximation results in probabilistic nature. 
This can be seen by the failure probability for the above process, which is less than $\delta$ as proved by Supplimentary Lemma IV.3 in Ref.~\cite{del_Rio2011}.

Now, we define $W_A(\rho)$ to be the work gain for an action $A$ on $\rho$ upon success.
Following this definition, we can define {\em work gain} of an allowed process $\pro_\rho=\{A_n\}_{n=1}^N$, denoted by $W(\pro_\rho)$, to be $W(\pro_\rho)\coloneqq\sum_{n=1}^N W_{A_{n}}(\rho)$, upon success of all the involved actions.
This also explains why we define an allowed process to be {\em  a chain of actions} rather than a single mapping: if one define a process to be a mapping, it may be ill-defined physically: there may be different chains of actions with {\em different net work gains} result in the same mapping, which is, however, not the physical property that we expect.


\section{Local Erasure Process and Fully Entangled Fraction}\label{Sec:Erasure}

Consider a given quantum state $\rho\in\bipstate$. Within our framework introduced in Sec. ~\ref{Sec:Formalism}, we say $\pro_\rho^{\rm Er|A}$ is a {\em local (Landauer's) erasure process} for $\rho$ on Alice's side if (1) $\pro_\rho^{\rm Er|A} = \pro_{\proj{\phi}}$, where $\proj{\phi}$ is a purification of $\rho$ (and the additional space needed for the purification is assumed to be included in the space of the observer), and (2) $\pro_{\proj{\phi}}$ maps $\proj{\phi}\mapsto\proj{{\bf 0}}\otimes{\rm tr}_{\rm A}\proj{\phi}$, i.e., the state outside Alice's local system will be preserved. Here, we use the notation $\proj{{\bf 0}}\coloneqq(\proj{0})^{\otimes l}$.
This shows why we call it a {\em local} erasure process.
Also, we only consider {\em minimal deterministic average work cost} in iid limit of erasure process defined as follows~\cite{del_Rio2011} (note that we have a minus sign here because we always reserve positive value for work gain): 
\begin{eqnarray}
&-\Wm {\rho} \coloneqq \inf\{&w\, | \, \exists \{\pro^{\rm Er|A}_{\rho^{\otimes k}}\}_{k=1}^\infty\,{\rm s.t.}\, \nonumber\\
&&\lim_{k\to\infty}P[ -W(\pro^{\rm Er|A}_{\rho^{\otimes k}}) \le kw ] =1 \},\quad\quad\quad
\end{eqnarray}
where $P[ -W(\pro) \le x ]$ is the probability for the process $\pro$ to achieve work {\em cost} $-W(\pro)$ no greater than $x$ (recall the probabilistic nature of allowed actions).
To understand the meaning of this definition, note that only the regime $k\gg 1$ will be important.
Hence, $-\Wm {\rho}\times k$ is the minimal work cost that can be achieved almost with certainty for any large enough $k$, and the probability of failure will go to zero when $k\to\infty$. 
This justifies the interpretation of minimal {\em deterministic} average work cost in iid limit.
From Ref.~\cite{del_Rio2011}, we have the following upper bounds:
\begin{eqnarray}\label{Eq:del_Rio}
-\Wm{\rho} \le \Sab{\rho} k_B T\ln{2}.
\end{eqnarray}
The remarkable consequence of the above inequality is the existence of negative work cost of local erasure process for some entangled states~\cite{del_Rio2011}, which also implies that $\Wm{\rho}>0$ serves as an entanglement witness.
However, how strong the entanglement needs to be in order to have work gain under local erasure process is still unknown. 
This is one example of questions that can be answered by our first main result, which is a direct consequence of the following lemma, whose proof can be found in Appendix~\ref{App:Proof_Lemma}:

\begin{lemma}\label{Lemma}
If $\F{\rho}>\frac{1}{d}$, then
\begin{eqnarray}
\Sab{\rho} \le -\log_2⁡{\F{\rho}d}.
\end{eqnarray}
\end{lemma}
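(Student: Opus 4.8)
The plan is to rewrite $\Sab{\rho}$ as a relative entropy and then crush the problem down to a single two-outcome comparison selected by the vector that achieves the fully entangled fraction. Fix an optimal maximally entangled state $\ket{\Psi}$, so that $\bra{\Psi}\rho\ket{\Psi}=\F{\rho}$, and set $P=\proj{\Psi}$. First I would record the identity
\be
-\Sab{\rho}=\tr{\rho\log_2\rho}-\tr{\rho\log_2(\id\otimes\rho_{\rm B})},
\ee
i.e. $-\Sab{\rho}$ is the Umegaki divergence $D(\rho\,\|\,\id\otimes\rho_{\rm B})$ measured against the un-normalised (trace-$d$) operator $\id\otimes\rho_{\rm B}$. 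Proving the lemma is then equivalent to the lower bound $D(\rho\,\|\,\id\otimes\rho_{\rm B})\ge\log_2(\F{\rho}d)$.

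Next I would process this divergence through the dephasing channel attached to the projective measurement $\{P,\id-P\}$ and apply monotonicity of relative entropy under completely positive trace-preserving maps. The reason for choosing exactly this measurement is that both arguments become explicit scalars: the state side yields the distribution $(\F{\rho},1-\F{\rho})$, while the reference side yields $(\tfrac1d,\,d-\tfrac1d)$, since $\bra{\Psi}(\id\otimes\rho_{\rm B})\ket{\Psi}=\tfrac1d\,\tr{\rho_{\rm B}}=\tfrac1d$ for every maximally entangled $\ket{\Psi}$. Monotonicity then reduces the lemma to the scalar inequality
\be
\F{\rho}\log_2(\F{\rho}d)+(1-\F{\rho})\log_2\frac{1-\F{\rho}}{d-\frac{1}{d}}\ge\log_2(\F{\rho}d),
\ee
which I would attempt to verify for all $\F{\rho}\in(\tfrac1d,1]$; the hypothesis $\F{\rho}>\tfrac1d$ guarantees $\log_2(\F{\rho}d)>0$, and the pure maximally entangled case $\F{\rho}=1$ (where $\Sab{\rho}=-\log_2 d$) should pin down equality.

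The step I expect to be the main obstacle is exactly this final scalar estimate. The ``failure-outcome'' contribution $(1-\F{\rho})\log_2\frac{1-\F{\rho}}{d-1/d}$ is negative throughout the relevant range, so it pushes against the target and cannot simply be dropped; the bound that bare monotonicity hands over for free is therefore the full binary divergence, and squeezing the clean quantity $\log_2(\F{\rho}d)$ out of it is the delicate point. To close this gap I would try either a sharper contraction --- replacing the crude binary measurement by the full twirl onto the $P$ and $\id-P$ sectors, or by a channel engineered so the leftover term is manifestly nonnegative --- or an independent entropic route, bounding $S(\rho)$ from above by the measurement entropy $H(\F{\rho})+(1-\F{\rho})\log_2(d^2-1)$ and combining it with a suitable lower bound on $S(\rho_{\rm B})$. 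Identifying which refinement actually preserves $\log_2(\F{\rho}d)$ is the crux on which the whole argument turns.
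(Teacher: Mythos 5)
Your reduction is set up correctly---the identity $-\Sab{\rho}=D(\rho\,\|\,\id\otimes\rho_{\rm B})$ and the data-processing step through $\{P,\id-P\}$ with $\bra{\Psi}(\id\otimes\rho_{\rm B})\ket{\Psi}=\frac{1}{d}$ are both right---but the scalar inequality you defer to the end is not merely delicate: it is false on the entire hypothesis range. Writing $f=\F{\rho}$ and subtracting $\log_2(fd)$ from both sides, your target becomes $(1-f)\log_2\frac{1-f}{d-1/d}\ge(1-f)\log_2(fd)$, which for $f<1$ is equivalent to $\frac{1-f}{d-1/d}\ge fd$, i.e.\ to $f\le\frac{1}{d^2}$. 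That is incompatible with the hypothesis $f>\frac{1}{d}$, so no amount of care with the binary divergence can close the argument. Your proposed repairs do not change this: the full twirl onto the $P$ and $\id-P$ sectors reproduces exactly the same two-outcome divergence, and the measurement-entropy route yields the same numbers.

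Worse, your computation, evaluated where it is tight, produces a counterexample to the statement itself. For an isotropic state $\rIso$ whose optimal $\ket{\Psi}$ is $\ket{\Psi_d^+}$, one has $\rho_{\rm B}=\frac{\id}{d}$ and $\rIso$ is block-diagonal with respect to $\{P,\id-P\}$, so data processing is an equality and $-\Sab{\rIso}=f\log_2(fd)+(1-f)\log_2\frac{1-f}{d-1/d}<\log_2(fd)$ for every $f\in(\frac{1}{d},1)$; concretely, $d=2$ and $f=0.6$ give $\Sab{\rIso}\approx 0.60$ against the claimed bound $-\log_2 1.2\approx-0.26$. The paper's own proof takes an entirely different route---the quantum asymptotic equipartition property $S(A|B)_\rho=\lim_{\epsilon}\lim_{k}\frac{1}{k}H^\epsilon_{\rm min}$ combined with the operational formula $H_{\rm min}(A|B)_\rho=-\log_2[Q(A|B)_\rho\, d]$ and $Q\ge\F{\rho}$---but the step controlling the smoothing (Lemma~\ref{Lemma:Conti}, which rests on the bound $|Q_\rho-Q_\sigma|\le\frac{1}{d}\norm{\rho-\sigma}_2$ of Lemma~\ref{Alemma:Continuity}) loses a factor of $d$: the Choi operator appearing there has trace $d$ rather than $1$, the correct Lipschitz bound is $|Q_\rho-Q_\sigma|\le\norm{\rho-\sigma}_2$, the error term is then not uniform in $k$, and only the true but different inequality $H_{\rm min}(A|B)_\rho\le-\log_2{\F{\rho}d}$ survives. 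So the obstacle you flagged is real, and it is an obstruction in the statement, not merely in your argument.
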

}


Combining Eq.~\eqref{Eq:del_Rio} and Lemma~\ref{Lemma}, we obtain our first main result:
}
}
{\updatesvn
\begin{theorem}\label{Result:Work_gain}
If $\F{\rho}>\frac{1}{d}$, then
\begin{eqnarray}
\Wm{\rho} \ge k_B T \ln{\F{\rho}d}.
\end{eqnarray}
In other words, states with FEF larger than $\frac{1}{d}$ enable work gain under local erasure process.
\end{theorem}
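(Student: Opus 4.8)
The plan is to combine the two ingredients already in hand: the upper bound on the erasure work cost from Ref.~\cite{del_Rio2011}, namely $-\Wm{\rho} \le \Sab{\rho}\, k_B T \ln 2$ [Eq.~\eqref{Eq:del_Rio}], together with the conditional-entropy bound supplied by Lemma~\ref{Lemma}. Since the substantive content lives in the Lemma (proved in Appendix~\ref{App:Proof_Lemma}), the theorem itself should follow from a short chain of elementary manipulations rather than any new estimate.

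First I would multiply Eq.~\eqref{Eq:del_Rio} through by $-1$, which flips the inequality to read $\Wm{\rho} \ge -\Sab{\rho}\, k_B T \ln 2$. Next, under the standing hypothesis $\F{\rho}>\frac{1}{d}$, I would invoke Lemma~\ref{Lemma} in the equivalent form $-\Sab{\rho} \ge \log_2(\F{\rho}d)$ and substitute it into the previous display to obtain $\Wm{\rho} \ge k_B T \ln 2 \, \log_2(\F{\rho}d)$.

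The one place to stay alert is the conversion between logarithm bases: the conditional von Neumann entropy is defined with $\log_2$, whereas the target inequality is stated with the natural logarithm. Using $\log_2 x = \ln x / \ln 2$, the factor $\ln 2$ cancels and the right-hand side collapses to $k_B T \ln(\F{\rho}d)$, which is precisely the claimed bound. Finally, to justify the ``in other words'' clause, I would observe that $\F{\rho}>\frac{1}{d}$ forces $\F{\rho}d>1$, hence $\ln(\F{\rho}d)>0$; since the sign convention in the definition of $-\Wm{\rho}$ makes $\Wm{\rho}$ the work \emph{gain} of the process, strict positivity of this lower bound certifies genuine work gain under local erasure for every such state.

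I do not expect a real obstacle at this level: the two feeding inequalities do all the heavy lifting, so the only point of care is bookkeeping---tracking the sign of $\Wm{\rho}$ through the negation and not dropping the $\ln 2$ versus $\log_2$ conversion factor. The genuine difficulty has already been isolated and deferred to the proof of Lemma~\ref{Lemma}.
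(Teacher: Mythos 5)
Your proposal is correct and matches the paper's own derivation exactly: the paper obtains Theorem~\ref{Result:Work_gain} by combining Eq.~\eqref{Eq:del_Rio} with Lemma~\ref{Lemma} in precisely the way you describe, with the $\ln 2$ versus $\log_2$ conversion handled identically. No further comment is needed.
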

}
{\updatethree 
{\updatetwo 

\subsection{Theorem~\ref{Result:Work_gain} and Teleportation}
{\updatesvn
Several remarks can be made accordingly.
As the first one, we recall the famous result developed by Holodecki {\em et al.}~\cite{Horodecki1999} that a given state $\rho\in\bipstate$ is useful for teleportation if and only if $\F{\rho}>\frac{1}{d}$.
Together with Theorem~\ref{Result:Work_gain}, {\updatesvn one learn that }{\em a state $\rho\in\bipstate$ admits deterministic work gain in iid limit under local erasure process if it is useful for teleportation.}
This build a simple while quantitative link between erasure process and the teleportation power.
Also, Ref.~\cite{del_Rio2011} shows that $\Wm{\rho} > 0$ is a {\em sufficient} condition of entanglement. 
Via Theorem~\ref{Result:Work_gain}, one is able to partially address the converse --- we learn that $\F{\rho}> \frac{1}{d}$ is a sufficient condition of $\Wm{\rho}>0$. 
This means when the entanglement is strong enough (in the sense that FEF $>\frac{1}{d}$), there exists a local erasure process inducing work gain.}
}


{\updatetwo
\subsection{Theorem~\ref{Result:Work_gain} and Nonlocal Properties}\label{Sec:Erasure_Nonlocal}
As another remark, Theorem~\ref{Result:Work_gain} can be used to establish sufficient condition of locality and non-steerability {\updatesvn for certain class of quantum states with FEF $>\frac{1}{d}$.
To see this,} we note that Theorem~\ref{Result:Work_gain} implies the following upper bound on the FEF of the given state $\rho$ {\updatesvn with $\F{\rho}>\frac{1}{d}$}:
\begin{eqnarray}\label{Eq:Second_form}
\F{\rho}\le\frac{1}{d}e^{\frac{\Wm{\rho}}{k_B T}}.
\end{eqnarray}
As an example, consider isotropic state~\cite{Horodecki1999} defined by 
\begin{eqnarray}\label{Eq:Iso}
\rho_{\rm iso} (p) \coloneqq p\proj{\Psi_d^+} + (1-p)\frac{\mathbb{I}}{d^2},
\end{eqnarray}
where $\ket{\Psi_d^+}\coloneqq\frac{1}{\sqrt{d}}\sum_{i=0}^{d-1}\ket{ii}$ is the generalized singlet and $p\in[-\frac{1}{d^2 - 1},1]$ due to the positivity of a quantum state. 
{\updatesvn 
As a direct corollary, since $\rho_{\rm iso}$ is entangled if and only if $\F{\rho_{\rm iso}}>\frac{1}{d}$~\cite{RMP-Bell}, Theorem~\ref{Result:Work_gain} implies {\em $\rIso$ admits work gain under local erasure process if and only if $\rIso$ is entangled}.
For other nonlocal correlations, we make use the well-known property of isotropic states: there exist many thresholds for different nonlocal properties~\cite{RMP-Bell}.
For instance, there exist $\mathcal{F}_{\rm LHV}$, $\mathcal{F}_{\rm LHS}^\pi$, and $\mathcal{F}_{\rm LHS}$ such that $\rho_{\rm iso}$ is local under general positive operator-value measures (POVMs) if $\F{\rho_{\rm iso}} \le \mathcal{F}_{\rm LHV}$~\cite{Almeida2007, RMP-Bell}; $\rIso$ is unsteerable under projective POVMs if and only if $\F{\rho_{\rm iso}} \le \mathcal{F}_{\rm LHS}^\pi = \frac{H_d + H_d d - d}{d^2}$, where $H_d\coloneqq\sum_{n=1}^d{\frac{1}{n}}$~\cite{Wiseman2007}, and $\rIso$ is unsteerable under general POVMs if $\F{\rho_{\rm iso}} \le \mathcal{F}_{\rm LHS} = \tilde{p}_\phi (1-\frac{1}{d^2})+\frac{1}{d^2}$, where $\tilde{p}_\phi \coloneqq\frac{3d-1}{d^2-1}(1-\frac{1}{d})^d$~\cite{Almeida2007}. 
Applying Eq.~\eqref{Eq:Second_form} on $\rIso$ with FEF $>\frac{1}{d}$, we learn that it is local if
\begin{eqnarray}\label{Eq:W_LHV}
\Wm{\rho_{\rm iso}} \le  k_B T \ln{\mathcal{F}_{\rm LHV} d};
\end{eqnarray}
it is unsteerable under general POVMs if
\begin{eqnarray}
\Wm{\rho_{\rm iso}} \le  k_B T \ln\sbr{\tilde{p}_\phi \left( d-\frac{1}{d}\right)+\frac{1}{d}};
\end{eqnarray}
it is unsteerable under projective POVMs if
\begin{eqnarray}\label{Eq:W_LHS}
\Wm{\rho_{\rm iso}} \le  k_B T \ln{\frac{H_d+H_d d - d}{d}}.
\end{eqnarray}

}
Because there is a hierarchy consisting of different thresholds for FEF of isotropic states, one may wonder whether Eq.~\eqref{Eq:Second_form} can map this hierarchy onto the one consisting of different erasure work costs.
This can be achieved if the upper bound in Eq.~\eqref{Eq:Second_form} is saturated by isotropic states.
We will back to this issue in Sec.~\ref{Sec:Iso}.

}

}

\section{Work Extraction Process and Fully Entangled Fraction}\label{Sec:Extraction}
{\updatesvn

In this section, we want to study relation between work extraction and fully entangled fraction.
Before proceeding, let us firstly define the former.
Consider a given state $\rho\in\bipstate$.
We say $\pro_\rho^{\rm W}$ is a {\em work extraction process} for $\rho$ if it is an allowed process having $\frac{\id}{d^2}$ as the final state.
Note that the final Hamiltonian is also fully degenerate in our formalism, which means that the maximally mixed state equals  the Gibbs state in the given temperature.
Due to this definition, we define the {\em largest deterministic extractable work} in iid limit for $\rho$ as:
\begin{eqnarray}
&\Wt{\rho}\coloneqq\sup\{&w\,|\,\exists\{\pro_{\rho^{\otimes k}}^{\rm W}\}_{k=1}^\infty\,{\rm s.t.}\nonumber\\
&&\lim_{k\to 1}P[W(\pro_{\rho^{\otimes k}}^{\rm W})\ge kw]=1\}.\quad
\end{eqnarray}
Just like the definition of erasure work cost, $\Wt{\rho}$ serves as the maximal work gain from $\rho$ with certainty in iid limit.
This definition should be an answer of our initial question in iid limit, i.e., how much work can be extracted by processing $\rho$ to $\frac{\id}{d^2}$.

As a direct observation from Eq.~\eqref{Eq:del_Rio}, since the composition of processes $\pro_{(\proj{{\bf 0}})^{\otimes k}}^{\rm W}\circ\pro_{\rho^{\otimes k}}^{\rm Er}$ is a work extraction process 
, we conclude that (see Appendix~\ref{App:W_Total_Proof} for the proof)

\begin{eqnarray}\label{Eq:W_Total_lowerbound}
\Wt{\rho} \ge k_BT\ln{d^2} - S(\rho)k_BT\ln{2}.
\end{eqnarray}

}
{\updatefive
Before stating the main result, we still need to introduce an estimate done by Dahlsten {\em et al.}~\cite{Dahlsten2011}.
Define an {\em $\epsilon$-compression} action for $\rho$ to be an action of unitary operations mapping as $\rho\mapsto U\rho U^\dagger$ such that $U\rho U^\dagger$ is $2\epsilon$-close to a state of the form $\rho' \otimes\proj{\psi}$, with success probability $P_{\rm success}\ge1-2\epsilon$~\cite{Renner2004}.
Then the process adapted by Ref.~\cite{Dahlsten2011} is given by $\{A_n\}_{n=1}^N$, where $A_1$ is an $\epsilon$-compression action, and $\{A_n\}_{n=2}^N$ amounts to a work extraction process on $\proj{\psi}$ with work gain $k_B T\ln{d_\psi}$, where $d_\psi$ is the dimension of the local system $\proj{\psi}$.
We call them {\em compression-extraction} processes, and use the notation $\pro^\epsilon_\rho$ to denote such processes with $P_{\rm success}\ge1-2\epsilon$.
Then} Theorem 2 in Ref.~\cite{Dahlsten2011} states that if the extractable work by a {\updatefour compression-extraction process} of a given bipartite state $\rho\in\bipstate$ is lower bounded by $k_B T \ln{⁡d^2}-\sbr{H_{\rm min}^\epsilon (\rho)+3 \ln{⁡\epsilon}} k_B T \ln{⁡2}$, then we have $P_{\rm success}< 2\epsilon$.
{\updatesix Here $H_{\rm min}^\epsilon (\rho)$ is the smooth min-entropy of $\rho$, whose definition can be found in Eq.~\eqref{Eq:smooth-min-entropy} (we refer the readers to Refs.~\cite{RennerPhD, Dahlsten2011} and references therein for the detail).
This means whenever one is able to choose some $\pro^{\frac{1}{2}-\epsilon}_\rho$ to extract $\Wt{\rho}$ (therefore with $P_{\rm success}\ge 2\epsilon$), Theorem 2 in Ref.~\cite{Dahlsten2011} becomes an {\em upper} bound on it: $\Wt{\rho} < k_B T \ln{⁡d^2}-\sbr{H_{\rm min}^\epsilon (\rho)+ 3 \ln{\epsilon}} k_B T \ln{⁡2}$.
With Lemma~\ref{Lemma} in hand, we are now able to estimate $\Wt{\rho}$ in terms of FEF and conditional von Neumann entropy under the above conditions: [We refer the reader to Appendix~\ref{App:Proof_Estimate} for the proof; {\updatefour also, we adapt the notation $S_{\rm min}(\rho)\coloneqq\min\{S(\rho_{\rm A});S(\rho_{\rm B})\}$}]

\begin{theorem}\label{Result:Estimate}
Given $0 < \epsilon \le \frac{1}{2}$ and $\delta_\epsilon\coloneqq -3\ln{\epsilon}$. 
If \\
{\rm (1)} $\F{\rho}>\frac{1}{d}$,\\
{\rm (2)} there exits $\pro^{\frac{1}{2}-\epsilon}_\rho$ which can extract $\Wt{\rho}$, and \\
{\rm (3)} $\log_2⁡{\norm{\rho}_\infty} = \log_2⁡{\F{\rho} d} - S_{\rm min} (\rho)$, then
\begin{eqnarray}\label{Eq:Picture}
\Wt{\rho} \approx k_B T \ln{d^2} - S_{\rm min}(\rho) k_B T \ln{2} + \Wm{\rho},\,
\end{eqnarray}
and
\begin{eqnarray}
\Wm{\rho}\approx k_B T \ln{\F{\rho}d},
\end{eqnarray}
up to an error $\delta_\epsilon k_B T\ln{2}$.

\end{theorem}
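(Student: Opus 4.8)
The plan is to sandwich $\Wt{\rho}$ between a constructive lower bound and the Dahlsten {\em et al.} upper bound, and then show that the two bounds differ by no more than $\delta_\epsilon k_B T \ln 2$, which forces both claimed approximations simultaneously.

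For the lower bound I would decompose a work extraction process into a local erasure followed by ordinary work extraction from the resulting product state. Using the swap symmetry of $\F{\rho}$ (and hence of the bound in Theorem~\ref{Result:Work_gain}), I erase the party carrying the larger local entropy, so that the erasure step maps $\rho$ to the tensor product of a pure local state and the preserved reduced state, whose global von Neumann entropy is exactly $S_{\rm min}(\rho)$, while contributing work gain $\Wm{\rho}$. Applying Eq.~\eqref{Eq:W_Total_lowerbound} to this product state and invoking additivity of work gain over a chain of actions, I obtain
\begin{equation}
\Wt{\rho} \ge k_B T \ln d^2 - S_{\rm min}(\rho)\, k_B T \ln 2 + \Wm{\rho}.
\end{equation}
Since assumption (1) lets Theorem~\ref{Result:Work_gain} give $\Wm{\rho}\ge k_B T\ln\F{\rho}d$, the right-hand side is at least $k_B T \ln d^2 - S_{\rm min}(\rho)\, k_B T \ln 2 + k_B T \ln \F{\rho}d$.

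For the upper bound, assumption (2) lets me invoke Theorem~2 of Ref.~\cite{Dahlsten2011} in its contrapositive form, yielding $\Wt{\rho} < k_B T\ln d^2 - [H_{\rm min}^\epsilon(\rho)+3\ln\epsilon]\,k_B T\ln 2$. I would then replace the smooth min-entropy by the ordinary one via $H_{\rm min}^\epsilon(\rho)\ge H_{\rm min}(\rho) = -\log_2\norm{\rho}_\infty$ (smoothing can only raise the min-entropy), which preserves the direction of the inequality, and write $3\ln\epsilon = -\delta_\epsilon$. Assumption (3) then identifies $-H_{\rm min}(\rho)\,k_B T\ln 2 = k_B T\ln\F{\rho}d - S_{\rm min}(\rho)\,k_B T\ln 2$, giving
\begin{equation}
\Wt{\rho} < k_B T\ln d^2 - S_{\rm min}(\rho)\,k_B T\ln 2 + k_B T\ln\F{\rho}d + \delta_\epsilon k_B T\ln 2.
\end{equation}

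Chaining the two displays shows that $k_B T\ln\F{\rho}d \le \Wm{\rho} < k_B T\ln\F{\rho}d + \delta_\epsilon k_B T\ln 2$, i.e. $\Wm{\rho}\approx k_B T\ln\F{\rho}d$, and that $\Wt{\rho}$ lies within $\delta_\epsilon k_B T\ln 2$ of $k_B T\ln d^2 - S_{\rm min}(\rho)\,k_B T\ln 2 + k_B T\ln\F{\rho}d$; substituting the first approximation back gives the first displayed estimate, both with error $\delta_\epsilon k_B T\ln 2$. The main obstacle I anticipate is the bookkeeping around the smooth min-entropy: I must ensure the smoothing inequality $H_{\rm min}^\epsilon\ge H_{\rm min}$ points the right way so that trading $H_{\rm min}^\epsilon$ for the explicitly controllable $-\log_2\norm{\rho}_\infty$ of assumption (3) still produces a valid {\em upper} bound, and that the $3\ln\epsilon$ term is cleanly absorbed into the single error $\delta_\epsilon k_B T\ln 2$. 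A secondary subtlety is justifying that the composed erasure-plus-extraction chain is an admissible work extraction process with additive work gain and the correct final Hamiltonian, and that selecting the erasure side to realize $S_{\rm min}(\rho)$ is legitimate by the FEF swap symmetry.
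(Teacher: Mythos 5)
Your proposal is correct and follows essentially the same route as the paper: sandwiching $\Wt{\rho}$ between the Dahlsten upper bound (converted via $H_{\rm min}^\epsilon(\rho)\ge-\log_2\norm{\rho}_\infty$ and assumption (3)) and the lower bound from the composite erasure-plus-extraction process of Fig.~\ref{Figure:Picture}, then closing the gap with $\Wm{\rho}\ge k_BT\ln\F{\rho}d$ from Theorem~\ref{Result:Work_gain}. The only cosmetic difference is that the paper first proves a general proposition whose lower bound goes through Eq.~\eqref{Eq:W_Total_lowerbound} and Lemma~\ref{Lemma} applied to $S(\rho)$, and invokes the composite-process bound only in the final chain, whereas you use the composite-process bound throughout; the substance is identical.
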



{\updatesvn
As a remark, Eq.~\eqref{Eq:Picture} provides a picture (see Fig.~\ref{Figure:Picture}) of an approximately optimal {\em global} deterministic work extraction in iid limit.}
Note that the above approximation will be faithful if $\delta_\epsilon k_B T\ln{2} \ll k_B T\ln{d}$, i.e., if $\delta_\epsilon \ll l$, where $l$ is the number of qubits in a single party ($d = 2^l$).
In other words, $\delta_\epsilon k_B T\ln{2}$, {\update which is the best resolution for energy in this case}, will be an irrelevant scale. 
Theorem~\ref{Result:Estimate} can be regarded as an approximately sufficient condition of the tightness of Theorem~\ref{Result:Work_gain}, and thus Eq.~\eqref{Eq:del_Rio} and Lemma~\ref{Lemma}.
In particular, approximate saturation for Lemma~\ref{Lemma} together with condition (3) in the above theorem implies the following corollary:
\begin{eqnarray}
S(\rho)\approx-\log_2⁡{\norm{\rho}_\infty},
\end{eqnarray}
up to an error $\delta_\epsilon$.

{\updatesvn

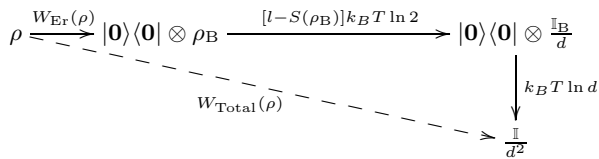
\begin{figure}[t]
\begin{displaymath}
\xymatrix{ \rho \ar[r]^{\Wm{\rho}\quad\quad} \ar@{-->}[drrrr]_{W_{\rm Total}(\rho)\quad} & \proj{{\bf 0}}\otimes \rho_{\rm B} \ar[rrr]^{\, [l - S(\rho_{\rm B})] k_B T \ln{⁡2} } &&& \proj{{\bf 0}}\otimes\frac{\mathbb{I}_{\rm B}}{d} \ar[d]^{k_B T \ln{d}}\\
&&&&\frac{\mathbb{I}}{d^2}
}
\end{displaymath}
\caption{\updatesvn Schematic interpretation of Eq.~\eqref{Eq:Picture}. Assume $S_{\rm min}(\rho)=S(\rho_{\rm B})$ without loss of generality. In this diagram, via an optimal local erasure process in iid limit on Alice's side (the upper-left arrow), a work extraction on Bob's side (the upper-right arrow), and a work extraction on Alice's side (the downward arrow), $\Wt{\rho}$ is extracted (the dashed arrow), up to $\delta_\epsilon k_B T\ln{2}$.}\label{Figure:Picture}
\end{figure}

}

{\update 
To illustrate the applications of Theorem~\ref{Result:Estimate}, let us compute some examples.
In particular, what we want is a thermodynamic interpretation of FEF for certain quantm states. 
}
Due to later consideration, let us define $\Lambda (\rho)\coloneqq\log_2⁡{\norm{\rho}_\infty} - \sbr{\log_2⁡{\F{\rho} d} - S_{\rm min} (\rho)}$.
{\updatesix 
The following subsections hold for $\epsilon$-values satisfying Theorem~\ref{Result:Estimate}.
}

{\updatefive \subsection{Theorem~\ref{Result:Estimate} and Isotropic States}\label{Sec:Iso}}
{\updatesvn 
From Eq.~\eqref{Eq:Iso}, one can see that {\em $\Lambda\sbr{\rIso (p)} = 0$ for all $p$} [to prove this, it suffices to note that $\F{\rIso} = \norm{\rIso}_\infty$ and $S_{\rm min}(\rIso) = \log_2{d}$].
This means for isotropic states with FEF $>\frac{1}{d}$,} if there exists  $\pro^{\frac{1}{2}-\epsilon}_{\rIso}$ which can extract $\Wt{\rIso}$, then $\Wt{\rIso} \approx k_B T \ln{d} + \Wm{\rIso}$ and $\Wm{\rIso} \approx  k_B T \ln⁡{\F{\rIso} d}$ up to an energy scale $\delta_\epsilon k_B T \ln{2}$.
In particular, the latter suggests a possible thermodynamic interpretation of FEF for isotropic states---{\em up to $\delta_\epsilon k_B T \ln{2}$}, FEF of isotropic states is a concept equivalent to the minimal deterministic work cost in iid limit of local erasure process.


{\updatetwo
{\updatesvn Also note that $\Wm{\rIso} \approx  k_B T \ln⁡{\F{\rIso} d}$ builds an approximate hierarchy according to the result discussed in Sec.~\ref{Sec:Erasure_Nonlocal} from the approximate saturation of Eq.~\eqref{Eq:Second_form}; namely, by substituting different FEF thresholds of nonlocal properties for isotropic states into the saturation bound, one obtain the corresponding erasure work cost threshold for isotropic states, up to the energy scale $\delta_\epsilon k_B T \ln{2}$.
For example, if $\rIso$ (with FEF $>\frac{1}{d}$) can achieve the equality in Eq.~\eqref{Eq:Second_form}, then Eq.~\eqref{Eq:W_LHS} implies $\rIso$ is steerable under projective measurement if and only if $\Wm{{\rIso}}\le k_B T \ln{\frac{H_d+H_d d - d}{d}}$.}
Being saturated with a resolution $\delta_\epsilon k_B T \ln{2}$, we may interpret $ k_B T \ln{\frac{H_d+H_d d - d}{d}}$ as an {\em approximate} energy threshold for isotropic states with possible error $\delta_\epsilon k_B T \ln{2}$, which will be small when the system is large enough.
Similar argument applies to other thresholds found in Sec.~\ref{Sec:Erasure_Nonlocal}.

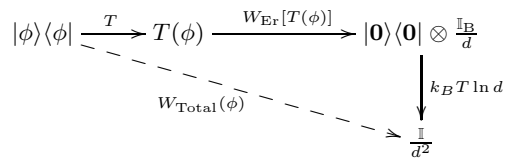
\begin{figure}[t]
\begin{displaymath}
\xymatrix{ \proj{\phi}\ar[r]^{T} \ar@{-->}[drrr]_{\Wt{\phi}\quad} & T(\phi) \ar[rr]^{W_{\rm Er}\sbr{T(\phi)}\quad} && \proj{{\bf 0}}\otimes\frac{\mathbb{I}_{\rm B}}{d}\ar[d]^{k_B T\ln{d}} \\
&&&\frac{\mathbb{I}}{d^2}
}
\end{displaymath}
\caption{\updatesvn Schematic interpretation for the optimal deterministic work extraction in iid limit up to $\delta_\epsilon k_B T \ln{2}$. The upper-left arrow is the quantum twirling bringing the input pure state to an isotropic state, and the remaining processes, i.e. the upper-right and downward arrows, are just the consequence of Eq.~\eqref{Eq:Picture} on isotropic states.}\label{Figure:Pure_state}
\end{figure}

{\updatefive \subsection{Theorem~\ref{Result:Estimate} and Arbitrary Pure States}}
{\updatesvn Now let us consider arbitrary {\updatefour \em pure} states $\ket{\phi}\in\bip$ (we adapt the notation $\phi\coloneqq\proj{\phi}$).
From Ref.~\cite{Horodecki1999}, we learn that} every state $\rho\in\bipstate$ can be turned into an isotropic state via {\em quantum twirling}
\begin{eqnarray}
T(\rho)\coloneqq\int_{U(d)}(U\otimes U^*)\rho(U\otimes U^*)^\dagger dU,
\end{eqnarray} 
where $dU$ is the Haar measure representing uniform distribution over the unitary group $U(d)$. 
{\updatefour Now, the minimal work cost of the quantum twirling on $\phi$ is given by~\cite{Faist2015}} $k_B T \ln{\norm{T(\Pi_\phi)}_\infty}$, where $\Pi_\phi$ is the projector onto the support of the state $\phi$.
Because $\phi$ is pure, we have $\Pi_\phi = \phi$.
Since $\norm{\rIso}_\infty = \F{\rIso}$ and $T(\phi)$ is an isotropic state, the minimal work cost of quantum twirling is 
$k_B T \ln{\mathcal{F}\sbr{T(\phi)}}$.
{\updatesvn By assuming the existence of $\pro^{\frac{1}{2}-\epsilon}_{T(\proj{\phi})}$ which can extract $W_{\rm Total}[T(\proj{\phi})]$,
we use the result for isotropic state to conclude the following approximation for pure states achieving $\mathcal{F}\sbr{T(\phi)}>\frac{1}{d}$:}
\begin{eqnarray}
&\Wt{\phi} &\approx - k_B T \ln{\mathcal{F}\sbr{T(\phi)}} + k_B T\ln{\mathcal{F}\sbr{T(\phi)}d^2}\nonumber\\
&&= k_B T \ln{d^2},
\end{eqnarray}
{\updatesvn up to the precision $\delta_\epsilon k_B T \ln{2}$.
This is the value predicted by Landauer's principle~\cite{del_Rio2011,Oppenheim2002,Alicki2004} (note that, however, this is still an {\em approximation} because the result for isotropic states is not exact).}
{\updatesvn See Fig.~\ref{Figure:Pure_state} for the schematic interpretation.}

\section{Conclusion}
\label{sec-conclusion}
In this work, we try to connect quantum thermodynamics and quantum nonlocality. 
Consider a given state $\rho\in\bipstate$ with dimension $d=2^l$ with $l\in\mathbb{N}$, we prove an inequality which shows that $\rho$ can induce work gain under local erasure process deterministic in iid limit if $\F{\rho}$, its fully entangled fraction (FEF), is larger than $\frac{1}{d}$, thereby connecting work gain under erasure process to the usefulness of quantum teleportation~\cite{Horodecki1999}.

By considering work extraction under temperature $T$ as a process mapping the initial state $\rho$ with a fully degenerate Hamiltonian to Gibbs state  in $T$ with the same Hamiltonian (therefore $\frac{\id}{d^2}$), we derive an approximation for the optimal deterministic extractable work in iid limit, with three prerequisites and an error in the energy scale, which is small in the large system limit. 
The prerequisites of this approximation also serve as a sufficient condition of approximate saturation of our first main result. 
Moreover, a simple picture of optimal work extraction process deterministic in iid limit is proposed by this approximation. 
When it is applicable to isotropic state, we further obtain a possible thermodynamic interpretation of FEF: up to an error in the energy scale, FEF of isotropic states is conceptually equivalent to the minimal work cost (deterministic in iid limit) of local erasure process.
When we focus on pure states, an improved version of the approximation can be derived.
The results we obtained can be a starting point of future research {\updatefour on the interface of quantum thermodynamics and quantum nonlocality}.

\section*{ACKNOWLEDGEMENTS}
This work was supported by the Ministry of Science and Technology of Taiwan under Grant No. 105-2119-M-007-004. Authors acknowledge many fruitful discussions  with Prof. Yeong-Cherng Liang and Prof. Hsiu-Hau Lin.

\appendix

\section{Proof of Lemma~\ref{Lemma}}\label{App:Proof_Lemma}

\begin{proof}
From Ref.~\cite{Konig2009}, we learn that:
\begin{eqnarray}
&&S(A|B)_\rho = \lim_{\epsilon\to 0}\lim_{k\to\infty}\frac{1}{k}H_{\rm min}^\epsilon (A^{\otimes k}|B^{\otimes k})_{\rho^{\otimes k}}\nonumber\\
&&\coloneqq \lim_{\epsilon\to 0}\lim_{k\to\infty}\frac{1}{k}\sup_{\norm{\eta-\rho^{\otimes k}}_B<\epsilon}H_{\rm min} (A^{\otimes k}|B^{\otimes k})_\eta        ,
\end{eqnarray}
where $\norm{\rho-\sigma}_B\coloneqq\sqrt{2-2F(\rho,\sigma)}$ is the {\em Bures distance}~\cite{Konig2009} ($F$ is the fidelity defined in Ref.~\cite{QCI-text}).
According to Theorem 2 in Ref.~\cite{Konig2009}, we have
\begin{eqnarray}
H_{\rm min}(A|B)_\rho = -\log_2[Q(A|B)_\rho d]
\end{eqnarray}
with
\begin{eqnarray}
Q(A|B)_\rho \coloneqq \max_\mathcal{E} \bra{\Psi_d^+}(\mathbb{I}_A\otimes\mathcal{E})(\rho)\ket{\Psi_d^+},
\end{eqnarray}
where $\mathcal{E}:\mathcal{L}(\mathbb{C}^d)\to\mathcal{L}(\mathbb{C}^d)$ is a CPTPM.
{\updatesvn 
Substituting the above form, using Lemma~\ref{Lemma:Conti}, and note the fact $Q(A^{\otimes k}|B^{\otimes k})_{\rho^{\otimes k}}\ge \F{\rho^{\otimes k}}\ge\F{\rho}^k$, we conclude}
\begin{eqnarray}
&&S(A|B)_\rho = -\lim_{\epsilon\to 0}\lim_{k\to\infty}\frac{1}{k}\log_2\sbr{Q(A^{\otimes k}|B^{\otimes k})_{\rho^{\otimes k}} d^k}.\nonumber\\
&&\le -\lim_{\epsilon\to 0}\lim_{k\to\infty}\frac{1}{k}\log_2\sbr{\F{\rho}d}^k = -\log_2{\F{\rho}d}.
\end{eqnarray}
\end{proof}

\section{Notes on the $Q$ function}\label{App:Continuity}
To start with, let us define the {\em Hilbert-Schmidt norm} on an operator $X$ as the norm induced by the Hilbert-Schmidt inner product~\cite{QCI-text}:
\begin{eqnarray}
\norm{X}_2\coloneqq\sqrt{\tr{X^\dagger X}}
\end{eqnarray} 
First, we prove the following lemma:

\begin{alemma}\label{Alemma:Continuity}
$Q(A|B)_\rho$ is continuous on $\rho\in\bipstate$ with Hilbert-Schmidt norm.
\end{alemma}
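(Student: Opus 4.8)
The plan is to exhibit $Q(A|B)_\rho$ as a supremum of a family of functions of $\rho$ that are uniformly Lipschitz in the Hilbert–Schmidt norm, and then invoke the elementary fact that such a supremum is itself Lipschitz, hence continuous. For each fixed CPTPM $\mathcal{E}$, define the functional $f_\mathcal{E}(\rho)\coloneqq\bra{\Psi_d^+}(\mathbb{I}_A\otimes\mathcal{E})(\rho)\ket{\Psi_d^+}$, which is manifestly linear in $\rho$. By definition $Q(A|B)_\rho=\sup_\mathcal{E} f_\mathcal{E}(\rho)$, the supremum being attained because the set of CPTP maps is compact (e.g. via the Choi isomorphism). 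Crucially, for the continuity argument the attainment of the maximum is not even needed; only the uniform control of the $f_\mathcal{E}$ matters.

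First I would establish a Lipschitz estimate uniform in $\mathcal{E}$. Since $\rho-\sigma$ is Hermitian, linearity gives $f_\mathcal{E}(\rho)-f_\mathcal{E}(\sigma)=\tr{\proj{\Psi_d^+}(\mathbb{I}_A\otimes\mathcal{E})(\rho-\sigma)}$, which I bound by $\norm{\proj{\Psi_d^+}}_\infty\,\norm{(\mathbb{I}_A\otimes\mathcal{E})(\rho-\sigma)}_1$ using trace–operator-norm duality. Because $\mathbb{I}_A\otimes\mathcal{E}$ is completely positive and trace preserving, it contracts the trace norm on Hermitian operators, so $\norm{(\mathbb{I}_A\otimes\mathcal{E})(\rho-\sigma)}_1\le\norm{\rho-\sigma}_1$. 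Combining this with $\norm{\proj{\Psi_d^+}}_\infty=1$ and the finite-dimensional norm inequality $\norm{X}_1\le d\,\norm{X}_2$ (valid because the operators act on the $d^2$-dimensional space $\bip$) yields $|f_\mathcal{E}(\rho)-f_\mathcal{E}(\sigma)|\le d\,\norm{\rho-\sigma}_2$, with a constant independent of $\mathcal{E}$.

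With the uniform bound in hand, the conclusion follows from a routine supremum argument: for every $\mathcal{E}$ we have $f_\mathcal{E}(\rho)\le f_\mathcal{E}(\sigma)+d\,\norm{\rho-\sigma}_2\le Q(A|B)_\sigma+d\,\norm{\rho-\sigma}_2$, and taking the supremum over $\mathcal{E}$ on the left gives $Q(A|B)_\rho\le Q(A|B)_\sigma+d\,\norm{\rho-\sigma}_2$. By symmetry, $|Q(A|B)_\rho-Q(A|B)_\sigma|\le d\,\norm{\rho-\sigma}_2$, so $Q(A|B)_\rho$ is Lipschitz, and in particular continuous, on $\bipstate$ with respect to the Hilbert–Schmidt norm.

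The only genuinely delicate point is the \emph{uniformity} of the Lipschitz constant over the whole optimization set; everything else is standard. This uniformity is secured by two structural facts about CPTP maps used above—trace preservation, which furnishes trace-norm contractivity on Hermitian inputs, and the unit operator norm of $\proj{\Psi_d^+}$—neither of which references the particular $\mathcal{E}$. A small subtlety worth flagging is that $\rho-\sigma$ is not a density operator but is Hermitian, which is precisely the class on which positive trace-preserving maps are trace-norm contractions; this is what legitimizes the contraction step and hence the whole estimate.
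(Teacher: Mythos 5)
Your proof is correct, but it bounds the key quantity by a different route than the paper. Both arguments reduce to controlling $\left|f_{\mathcal{E}}(\rho)-f_{\mathcal{E}}(\sigma)\right|$ uniformly in $\mathcal{E}$ (the paper does this by picking the optimizer for whichever of the two $Q$-values is larger, which is equivalent to your supremum argument). Where you diverge is in the estimate itself: the paper rewrites $\bra{\Psi_d^+}(\mathbb{I}_{\rm A}\otimes\mathcal{E})(\rho-\sigma)\ket{\Psi_d^+}$ via the adjoint map and the Choi--Jamio\l kowski isomorphism as $\frac{1}{d}\,{\rm tr}\sbr{\mathcal{J}(\mathcal{E}^\dagger)(\rho-\sigma)}$ and then invokes Bhatia's generalized Cauchy--Schwarz inequality together with $\norm{\mathcal{J}(\mathcal{E}^\dagger)}_2\le 1$, yielding the Lipschitz constant $\frac{1}{d}$; you instead use H\"older duality $|\tr{AB}|\le\norm{A}_\infty\norm{B}_1$, trace-norm contractivity of the CPTP map $\mathbb{I}_{\rm A}\otimes\mathcal{E}$ on the Hermitian operator $\rho-\sigma$, and the dimension-dependent inequality $\norm{X}_1\le d\,\norm{X}_2$ on the $d^2$-dimensional space, yielding the constant $d$. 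Your argument is more elementary (no Choi isomorphism, no Bhatia inequality) and perfectly adequate for the \emph{statement} of the lemma, which only asserts continuity. Be aware, however, that the paper later uses the quantitative form of this bound, not just continuity: the proof of the next lemma invokes $\left|Q(A^{\otimes k}|B^{\otimes k})_{\rho^{\otimes k}}-Q(A^{\otimes k}|B^{\otimes k})_{\eta}\right|\le\frac{1}{d^k}\norm{\rho^{\otimes k}-\eta}_2$, and the subsequent logarithmic estimate relies on the $\frac{1}{d^k}$ prefactor matching the lower bound $Q>\frac{1}{d^k}$; with your constant $d^k$ in its place that step would degrade as $k$ grows. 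So your proof establishes the lemma as stated, but would not serve as a drop-in replacement for the sharper inequality the paper actually extracts from its own proof.
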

\begin{proof}
Given $\norm{\rho - \sigma}_2 < \epsilon$, there exists a CPTPM $\mathcal{E}_\epsilon$ achieving [We choose $Q(A|B)_\rho = \bra{\Psi_d^+}(\mathbb{I}_{\rm A}\otimes \mathcal{E}_\epsilon)(\rho)\ket{\Psi_d^+}$ when $Q(A|B)_\rho \ge Q(A|B)_\sigma$ and $Q(A|B)_\sigma = \bra{\Psi_d^+} (\mathbb{I}_{\rm A}\otimes \mathcal{E}_\epsilon)(\sigma)\ket{\Psi_d^+}$ when $Q(A|B)_\rho \le Q(A|B)_\sigma$]
\begin{eqnarray}
&&\left| Q(A|B)_\rho - Q(A|B)_\sigma \right| \nonumber\\
&&\le \left|\bra{\Psi_d^+}(\mathbb{I}_{\rm A}\otimes\mathcal{E}_\epsilon)(\rho)\ket{\Psi_d^+} - \bra{\Psi_d^+}(\mathbb{I}_{\rm A}\otimes\mathcal{E}_\epsilon)(\sigma)\ket{\Psi_d^+}\right| \nonumber\\
&&= \left|\bra{\Psi_d^+}(\mathbb{I}_{\rm A}\otimes\mathcal{E}_\epsilon)(\rho - \sigma)\ket{\Psi_d^+}\right|\nonumber\\
&&= \left| {\rm tr}\sbr{(\mathbb{I}_{\rm A}\otimes\mathcal{E}_\epsilon^\dagger)(\proj{\Psi_d^+})(\rho - \sigma)}\right| \nonumber\\
&&= \frac{1}{d} \left|{\rm tr}\sbr{\mathcal{J}(\mathcal{E}_\epsilon^\dagger)(\rho - \sigma)}\right|,
\end{eqnarray}
where $\mathcal{J}$ is the Choi-Jamio\l kowski isomorphism~\cite{Isomorphism}.
{\updatefour Now, we use the following generalized Cauchy-Schwarz inequality of operators proved by Bhatia, which holds for arbitrary operators $X$, $Y$, and arbitrary unitarily invariant norms $\norm{\cdot}$~\cite{Bhatia1988, Bhatia1995}:
\begin{eqnarray}\label{Eq:General_Cauchy}
\norm{\,|X^\dagger Y|^{\frac{1}{2}}}^2 \le \norm{X}\,\norm{Y},
\end{eqnarray}
{\updatesvn 
where a norm $\norm{\cdot}$ is {\em unitarily invariant} if $\norm{UXV} = \norm{X}$ holds for all operator X and unitary operators $U$, $V$.
Since $\norm{\cdot}_2$ is unitarily invariant,} 
direct computation shows that
\begin{eqnarray}
\norm{X}_2\norm{Y}_2 &&\ge \norm{\,|X^\dagger Y|^{\frac{1}{2}}}^2_2\nonumber\\
&&= {\rm tr}\left[ |X^\dagger Y|^{\frac{1}{2},\,\dagger}|X^\dagger Y|^{\frac{1}{2}}\right] = {\rm tr}|X^\dagger Y|.\quad\quad
\end{eqnarray}
This implies (note that the inequality $|\tr{A}|\le{\rm tr}|A|$ holds for {\em any} operator $A$~\footnote{\updatefour
To prove $|\tr{A}|\le{\rm tr}|A|$ for an arbitrary operator $A$, we firstly choose the polar decomposition~\cite{QCI-text} as $A=UJ$, where $U$ is unitary and $J$ is positive.
Then one can see it remains to prove $|\tr{UJ}|\le\tr{J}$, where $J$ is nothing but $|A|$.
To show this, let us write the spectrum decomposition~\cite{QCI-text} as $J=\sum_{n}{a_n \proj{\phi_n}}$, where $\{\ket{\phi_n}\}$ is an orthonormal basis of the given state space and $a_n \ge 0\ \,\forall\,n$.
Then direct computation shows $|\tr{UJ}| = |\sum_{n,m}\bra{\phi_m}Ub_n\ket{\phi_n}\bra{\phi_n}\phi_m\rangle| = |\sum_n \bra{\phi_n}U\ket{\phi_n}b_n| \le \sum_n |\bra{\phi_n}U\ket{\phi_n}|b_n \le \sum_n b_n = \tr{J}$.
})
\begin{eqnarray}
&&\frac{1}{d} \left|{\rm tr}\sbr{\mathcal{J}(\mathcal{E}_\epsilon^\dagger )(\rho - \sigma)}\right| \le \frac{1}{d}{\rm tr}\left|\mathcal{J}(\mathcal{E}_\epsilon^\dagger )(\rho - \sigma)\right|\nonumber\\
&&\le \frac{1}{d} \norm{\mathcal{J}(\mathcal{E}_\epsilon^\dagger )}_2\norm{\rho - \sigma}_2  \le \frac{1}{d} \norm{\rho - \sigma}_2 < \frac{\epsilon}{d}.
\end{eqnarray}
}Note that $\norm{\mathcal{J}(\mathcal{E}_\epsilon^\dagger )}_2 \le 1$ since $\mathcal{J}(\mathcal{E}_\epsilon^\dagger )$ is a normalized state due to Choi-Jamio\l kowski isomorphism theorem~\cite{Isomorphism}. 
This completes the proof.
\end{proof}

Now, we are in position to prove the following result:

{\updatesvn
\begin{alemma}\label{Lemma:Conti}
If $\F{\rho}>\frac{1}{d}$, then for all $k\in\mathbb{N}$ and $\epsilon\in(0,1)$, there exists $0\le o (\epsilon)\le\log_{2}{(\frac{2\epsilon}{1-2\epsilon} + 1)}$ such that
\begin{eqnarray}
&&\inf_{\norm{\eta - \rho^{\otimes k}}_B < \epsilon} \log_{2⁡}{Q(A^{\otimes k} | B^{\otimes k})_\eta} \nonumber\\
&&= \log_{2⁡}{Q(A^{\otimes k} | B^{\otimes k})_{\rho^{\otimes k}}} + o (\epsilon).
\end{eqnarray}
\end{alemma}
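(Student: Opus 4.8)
The plan is to reduce the statement to a bound on how much the conditional min-entropy can grow under $\epsilon$-smoothing, \emph{uniformly in $k$}, and then to control that growth by relating the $Q$-function to a fidelity. Since $\epsilon>0$, the state $\rho^{\otimes k}$ itself lies in the Bures ball $\{\eta:\norm{\eta-\rho^{\otimes k}}_B<\epsilon\}$, so the infimum is trivially at most $\log_2 Q(A^{\otimes k}|B^{\otimes k})_{\rho^{\otimes k}}$; the feasibility of $\rho^{\otimes k}$ already pins down one side, and I read the assertion as controlling the magnitude $|o(\epsilon)|\le\log_2\frac{1}{1-2\epsilon}$. The real content is therefore the matching lower bound $\inf_\eta\log_2 Q_\eta\ge\log_2 Q_{\rho^{\otimes k}}-\log_2\frac{1}{1-2\epsilon}$, equivalently the \emph{multiplicative} estimate $Q_\eta\ge(1-2\epsilon)\,Q_{\rho^{\otimes k}}$ for every admissible $\eta$. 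By König's Theorem~2 (the identity $H_{\min}=-\log_2[Q\,d]$ used above) this is the same as saying that $\epsilon$-smoothing raises $H_{\min}(A^{\otimes k}|B^{\otimes k})$ by at most $\log_2\frac{1}{1-2\epsilon}$, a quantity independent of $k$.

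For this lower bound I would use the fidelity form of $Q$. Writing $\sqrt{Q_\eta}=\max_{\mathcal{E}}F\big((\mathbb{I}_A\otimes\mathcal{E})(\eta),\proj{\Psi}\big)$ and inserting the recovery channel $\mathcal{E}^\ast$ that is optimal for $\rho^{\otimes k}$ gives $\sqrt{Q_\eta}\ge F\big((\mathbb{I}\otimes\mathcal{E}^\ast)(\eta),\proj{\Psi}\big)$. Monotonicity of fidelity under $\mathbb{I}\otimes\mathcal{E}^\ast$ transports the hypothesis $\norm{\eta-\rho^{\otimes k}}_B<\epsilon$ (i.e.\ $F(\eta,\rho^{\otimes k})>1-\epsilon^2/2$) to the images, and then the Bures-angle triangle inequality among the three states $(\mathbb{I}\otimes\mathcal{E}^\ast)(\eta)$, $(\mathbb{I}\otimes\mathcal{E}^\ast)(\rho^{\otimes k})$ and $\proj{\Psi}$, together with the Fuchs--van de Graaf relation, converts the $\epsilon$-closeness into a perturbation of $Q$. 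The continuity established in Lemma~\ref{Alemma:Continuity} guarantees that the infimum is well defined and approached, so no measurability issue arises.

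The hard part is that every elementary estimate of this kind — the fidelity/triangle argument above, or equivalently perturbing the semidefinite program $Q_\eta d^k=\min_{\sigma_B}\{\tr{\sigma_B}:\eta\le\mathbb{I}_A\otimes\sigma_B\}$ through the operator bound $\eta\ge\rho^{\otimes k}-\epsilon\,\mathbb{I}$ — yields only an \emph{additive} bound $Q_\eta\ge Q_{\rho^{\otimes k}}-O(\epsilon)$. Because $Q_{\rho^{\otimes k}}\ge\F{\rho}^{k}$ decays exponentially, such a bound becomes vacuous once $\F{\rho}^{k}\lesssim\epsilon$, and would only produce an $o(\epsilon)$ growing linearly in $k$, which is useless in the inner limit $k\to\infty$ that is taken before $\epsilon\to0$. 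Securing instead the genuinely \emph{multiplicative}, $k$-uniform estimate $Q_\eta\ge(1-2\epsilon)Q_{\rho^{\otimes k}}$ is the crux; this is essentially the smoothing half of the quantum asymptotic equipartition property, and here the hypothesis $\F{\rho}>\frac{1}{d}$ earns its keep by forcing $Q_{\rho^{\otimes k}}\ge\F{\rho}^{k}>d^{-k}$ to stay well above the universal floor $Q\ge d^{-2k}$, which is what I expect makes a $k$-independent multiplicative smoothing bound available. I would therefore close this step by invoking the known one-shot/AEP smoothing estimates for the conditional min-entropy rather than elementary continuity, reading off the constant $\log_2\frac{1}{1-2\epsilon}$ from the Bures-to-trace-distance conversion $\tfrac{1}{2}\norm{\eta-\rho^{\otimes k}}_1<\epsilon$.
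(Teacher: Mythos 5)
You correctly identify what the lemma is really asserting: since $\rho^{\otimes k}$ is itself feasible, the content is a lower bound on $\inf_\eta \log_2 Q_\eta$ that is \emph{uniform in $k$}, equivalently the multiplicative estimate $Q_\eta \ge (1-2\epsilon)\,Q_{\rho^{\otimes k}}$, and your constant $\log_2\frac{1}{1-2\epsilon}$ agrees with the paper's $\log_2\bigl(\frac{2\epsilon}{1-2\epsilon}+1\bigr)$. But your proof does not close: the final step is an appeal to ``known one-shot/AEP smoothing estimates,'' and no such off-the-shelf result gives an \emph{upper} bound of the form $H_{\rm min}^\epsilon \le H_{\rm min} + \log_2\frac{1}{1-2\epsilon}$ (the smoothing direction of the AEP controls $H_{\rm min}^\epsilon$ from \emph{below} by $kS(A|B)-O(\sqrt{k})$; in general smoothing can raise the conditional min-entropy by an amount growing with $k$, which is the whole point of smoothing). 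So the crux you single out is left unproved.

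Moreover, the diagnosis that ``every elementary estimate of this kind yields only an additive bound $Q_\eta \ge Q_{\rho^{\otimes k}} - O(\epsilon)$,'' vacuous once $\F{\rho}^k \lesssim \epsilon$, is exactly where you and the paper part ways — and the paper is right. Its Lemma~\ref{Alemma:Continuity} gives an additive bound \emph{with the prefactor $1/d^k$}:
\begin{equation}
\left| Q(A^{\otimes k}|B^{\otimes k})_{\rho^{\otimes k}} - Q(A^{\otimes k}|B^{\otimes k})_{\eta}\right| \le \frac{1}{d^k}\,\norm{\rho^{\otimes k}-\eta}_2 < \frac{2\epsilon}{d^k},
\end{equation}
the $d^{-k}$ arising from the normalization of $\proj{\Psi_{d^k}^+}$ under the Choi--Jamio\l kowski rewriting, combined with $\norm{\mathcal{J}(\mathcal{E}^\dagger)}_2\le 1$ and the conversion $\norm{\cdot}_2 \le 2\norm{\cdot}_B$. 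This is precisely where the hypothesis $\F{\rho}>\frac{1}{d}$ enters: it forces $Q_{\rho^{\otimes k}} \ge \F{\rho}^k > d^{-k}$, so the perturbation $2\epsilon d^{-k}$ is at most a $\frac{2\epsilon}{1-2\epsilon}$ \emph{fraction} of the value being perturbed, uniformly in $k$, which immediately gives $|\log_2 Q_{\rho^{\otimes k}} - \log_2 Q_\eta| < \log_2\bigl(\frac{2\epsilon}{1-2\epsilon}+1\bigr)$. Your SDP/trace-norm perturbation loses this because $\norm{\eta-\rho^{\otimes k}}_1$ carries no $d^{-k}$ factor; the missing idea is not a deeper smoothing theorem but the dimension-dependent Lipschitz constant of $Q$ in Hilbert--Schmidt norm. (Your fidelity/recovery-map sketch could in principle be pushed through with Fuchs--van de Graaf, but as written it is not carried to a $k$-uniform conclusion either.)
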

}
\begin{proof}
{\updatesvn 
Consider a fixed $k$ value and a fixed $\epsilon\in(0,1)$.}
We first note the relation $\norm{\rho - \sigma}_2 \le 2\norm{\rho - \sigma}_B$~\footnote{Note that $ \norm{\rho - \sigma}_2\le\norm{\rho - \sigma}_1\le2\sqrt{1 - F(\rho , \sigma)^2}\le2\sqrt{2[1 - F(\rho,\sigma)]} =  2\norm{\rho - \sigma}_B$~\cite{QCI-text}, where $\norm{A}_1\coloneqq{\rm tr}|A|$ is the {\rm trace norm}. To see the fact $\norm{\rho - \sigma}_2\le\norm{\rho - \sigma}_1$, we first apply polar decomposition~\cite{QCI-text} to write $\rho - \sigma = UJ$, where $U$ is a unitary operator and $J$ is a positive operator. Then we have $\norm{\rho - \sigma}_2 = \norm{J}_2$ and $\norm{\rho - \sigma}_1 = \norm{J}_1$. Choose the spectrum decomposition~\cite{QCI-text} as $J = \sum_n a_n\proj{\phi_n}$, where $\{\ket{\phi_n}\}$ is an orthonormal basis and $a_n \ge 0\ \,\forall\,n$. One can verify that $\norm{J}_2^2 = \sum_n a_n^2 \le \left( \sum_n a_n \right)^2 = \norm{J}_1^2$, which proves the desired result.}.
This means $\norm{\eta - \rho^{\otimes k}}_2 < 2\epsilon$ if $\norm{\eta -\rho^{\otimes k}}_B < \epsilon$. 
When $\norm{\eta - \rho^{\otimes k}}_B < \epsilon$, Lemma~\ref{Alemma:Continuity} implies
\begin{eqnarray}
&&\left| Q(A^{\otimes k} | B^{\otimes k})_{\rho^{\otimes k}} - Q(A^{\otimes k} | B^{\otimes k})_\eta \right| \nonumber\\
&&\le \frac{1}{d^k} \norm{\rho^{\otimes k} - \eta}_2 < \frac{2\epsilon}{d^k}.
\end{eqnarray}
{\updatesvn 
For $\rho$ with $\F{\rho}>\frac{1}{d}$, the function $Q(A^{\otimes k} | B^{\otimes k})_{\rho^{\otimes k}}$ only takes values from the interval $(\frac{1}{d^k},1]$.
Hence, if $|x - y| < \frac{2\epsilon}{d^k}$ for some $x\in(\frac{1}{d^k},1]$ and $y\in (\frac{1-2\epsilon}{d^k},1]$ (assume $x\ge y$ without loss of generality), we have $|\log_{2}{x} - \log_{2}{y}| < \log_{2}{\left(\frac{2\epsilon}{d^k y} + 1\right)}< \log_{2}{(\frac{2\epsilon}{1-2\epsilon} + 1)}$.
Substituting $x = Q(A^{\otimes k} | B^{\otimes k})_{\rho^{\otimes k}}$ and $y = Q(A^{\otimes k} | B^{\otimes k})_\eta$ and consider $\norm{\eta -\rho^{\otimes k}}_B < \epsilon$, we conclude
\begin{eqnarray}
&&\inf_{\norm{\eta - \rho^{\otimes k}}_B < \epsilon}\log_{2}{Q(A^{\otimes k} | B^{\otimes k})_\eta} \nonumber\\
&&= \log_{2}{Q(A^{\otimes k} | B^{\otimes k})_{\rho^{\otimes k}}}- o (\epsilon)
\end{eqnarray}
with a function $o (\epsilon)$ satisfying $0\le o (\epsilon)\le\log_{2}{(\frac{2\epsilon}{1-2\epsilon} + 1)}$.
}
\end{proof}



{\updatesvn
\section{Proof of Equation~(\ref{Eq:W_Total_lowerbound})}\label{App:W_Total_Proof}
\begin{proof}
First, we consider a fixed $\epsilon\in(0,1)$.
For a given $k$, we choose $\Delta=\sqrt{k}$ for $\rho^{\otimes k}$ in Supplementary Corollary I.2 in Ref.~\cite{del_Rio2011}, where their $S$ amounts to our system $\sys$ and $Q$ is now a trivial system (with dimension 1) because we want to study work extraction on $\sys$.
Then this corollary implies the existence of a work extraction process $\pro_{\rho^{\otimes k}}^{\rm W}$ such that $\frac{1}{k}W(\pro_{\rho^{\otimes k}}^{\rm W})\ge k_B T \ln{d^2} - [\frac{1}{k}H_{\rm max}^\epsilon(\rho^{\otimes k}) + \frac{1}{\sqrt{k}}]k_B T\ln{2}$, except with a probability of at most $\sqrt{2^\frac{-\sqrt{k}}{2}+12\epsilon}$. 
Here we use the notation $H_{\rm max}^\epsilon(\eta) = H_{\rm max}^\epsilon(S|Q)_\eta$ for the $\epsilon$-smooth max-entropy of $S$ conditional on $Q$~\cite{del_Rio2011}.  
Using the fact $\lim_{k\to\infty}\frac{1}{k}H_{\rm max}^\epsilon(\rho^{\otimes k})= S(\rho)$ $\forall\,\,\epsilon\in(0,1)$, we conclude $\lim_{k\to\infty}P\{\frac{1}{k}W(\pro_{\rho^{\otimes k}}^{\rm W}) \ge k_B T \ln{d^2} - S(\rho)k_B T\ln{2}-\delta \}= 1-\sqrt{12\epsilon}$ for all small enough value $\delta>0$.
In other words, this means for every $\delta>0$, there exists $k_\delta$ such that  $P\{\frac{1}{k}W(\pro_{\rho^{\otimes k}}^{\rm W}) \ge k_B T \ln{d^2} - S(\rho)k_B T\ln{2}-\delta \}> 1-\sqrt{12\epsilon} - \delta$ $\forall\,\, k\ge k_\delta$.
For a fixed $\delta$, we let $\epsilon\to 0$ and obtain $P\{\frac{1}{k}W(\pro_{\rho^{\otimes k}}^{\rm W}) \ge k_B T \ln{d^2} - S(\rho)k_B T\ln{2}-\delta \}> 1 - \delta$ $\forall\,\, k\ge k_\delta$.
This implies $k_B T \ln{d^2} - S(\rho)k_B T\ln{2}-\delta\le\Wt{\rho}$ for all $\delta>0$.
Letting $\delta\to 0$ will result in the desired inequality.
\end{proof}
}

\section{Proof of Theorem~\ref{Result:Estimate}}\label{App:Proof_Estimate}

We first prove the following proposition, which is a general version of Theorem~\ref{Result:Estimate} [recall we define $S_{\rm min}(\rho)\coloneqq\min{\{S(\rho_{\rm A}),S(\rho_{\rm B})\}}$]:

\begin{aproposition}\label{Result:Estimate_general}
Given $0 < \epsilon \le \frac{1}{2}$ and $\delta > \delta_\epsilon\coloneqq -3\ln{\epsilon}$.  
If \\
{\rm (1)} {\updatesvn $\F{\rho}>\frac{1}{d}$,\\
{\rm (2)} there exists $\pro_\rho^{\frac{1}{2}-\epsilon}$} which can extract $\Wt{\rho}$, and \\
{\rm (3)} $- \delta_\epsilon < \log_2{\norm{\rho}_\infty} - \sbr{\log_2{\F{\rho}d} - S_{\rm min}(\rho)} < \delta - \delta_\epsilon$,
\\then we have
\begin{eqnarray}
\left| \frac{\Wt{\rho}}{k_B T\ln{2}} - \sbr{\log_2{\F{\rho}d^3} - S_{\rm min}(\rho)}\right| < \delta.
\end{eqnarray}
\end{aproposition}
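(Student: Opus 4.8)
The plan is to sandwich $\Wt{\rho}$ between a clean lower bound and an $\epsilon$-dependent upper bound, each matching the target quantity $\log_2{\F{\rho}d^3}-S_{\rm min}(\rho)$ up to the allowed tolerance, and then combine them into the absolute-value estimate. Writing $L\coloneqq\log_2{\F{\rho}d}$, I would start from the unconditional lower bound Eq.~\eqref{Eq:W_Total_lowerbound}, $\Wt{\rho}\ge k_B T\ln{d^2}-S(\rho)k_B T\ln{2}$, so that everything reduces to controlling the two entropies $S(\rho)$ and $H_{\rm min}^\epsilon(\rho)$ in terms of $\F{\rho}$, $S_{\rm min}(\rho)$, and $\norm{\rho}_\infty$.

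For the lower bound I would invoke Lemma~\ref{Lemma}. Condition~(1) gives $\F{\rho}>\frac{1}{d}$, so the lemma yields $\Sab{\rho}\le -L$; applying the same lemma to the swapped state (using that the FEF is invariant under exchanging Alice and Bob, since the set of maximally entangled states is SWAP-invariant) gives $\Sba{\rho}\le -L$ as well. Combining $S(\rho)=\Sab{\rho}+S(\rho_{\rm B})$ with $S(\rho)=\Sba{\rho}+S(\rho_{\rm A})$ then forces $S(\rho)\le S_{\rm min}(\rho)-L$. Substituting back and dividing by $k_B T\ln{2}$ produces the clean inequality $\frac{\Wt{\rho}}{k_B T\ln{2}}\ge\log_2{\F{\rho}d^3}-S_{\rm min}(\rho)$, so the quantity inside the absolute value is already $\ge 0 > -\delta$ (note $\delta>\delta_\epsilon>0$ since $\epsilon\le\frac12$). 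Notably this half uses only condition~(1) and Eq.~\eqref{Eq:W_Total_lowerbound}.

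For the upper bound I would use condition~(2): a process $\pro_\rho^{\frac12-\epsilon}$ extracting $\Wt{\rho}$ has $P_{\rm success}\ge 2\epsilon$, so the contrapositive of Dahlsten {\em et al.}'s Theorem~2 forbids $\Wt{\rho}$ from reaching the threshold, i.e.\ $\Wt{\rho}<k_B T\ln{d^2}-\sbr{H_{\rm min}^\epsilon(\rho)+3\ln{\epsilon}}k_B T\ln{2}$. Using $3\ln{\epsilon}=-\delta_\epsilon$ together with the smoothing inequality $H_{\rm min}^\epsilon(\rho)\ge H_{\rm min}(\rho)=-\log_2{\norm{\rho}_\infty}$ (smoothing can only raise the min-entropy, which is precisely the direction needed here), this becomes $\frac{\Wt{\rho}}{k_B T\ln{2}}<2\log_2{d}+\log_2{\norm{\rho}_\infty}+\delta_\epsilon$. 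Finally I would feed in the right half of condition~(3), $\log_2{\norm{\rho}_\infty}<L-S_{\rm min}(\rho)+\delta-\delta_\epsilon$, which cancels the $\delta_\epsilon$ and gives $\frac{\Wt{\rho}}{k_B T\ln{2}}<\log_2{\F{\rho}d^3}-S_{\rm min}(\rho)+\delta$. Combined with the lower bound, this is the claimed estimate.

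The main obstacle is the bookkeeping in the upper bound rather than the algebra: one must correctly translate Dahlsten's success-probability statement into a deterministic upper bound on $\Wt{\rho}$ through the contrapositive, verify that a $\frac12-\epsilon$ process indeed certifies $P_{\rm success}\ge 2\epsilon$, and use the smoothing inequality in the correct direction, $H_{\rm min}^\epsilon(\rho)\ge-\log_2{\norm{\rho}_\infty}$. The only other delicate point is the symmetry argument promoting $\Sab{\rho}\le -L$ to $\Sba{\rho}\le -L$, which makes the minimum $S_{\rm min}(\rho)$ appear in the lower bound.
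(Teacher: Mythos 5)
Your proposal is correct and follows essentially the same route as the paper's proof: the lower bound comes from Eq.~\eqref{Eq:W_Total_lowerbound} combined with Lemma~\ref{Lemma} (promoted to $S(\rho)\le S_{\rm min}(\rho)-\log_2{\F{\rho}d}$), and the upper bound from the contrapositive of Dahlsten's Theorem~2 together with $H_{\rm min}^\epsilon(\rho)\ge-\log_2{\norm{\rho}_\infty}$ and condition~(3). Your explicit SWAP-invariance argument justifying $\Sba{\rho}\le-\log_2{\F{\rho}d}$ fills in a step the paper leaves implicit, but the overall argument is the same.
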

\begin{proof}
From Lemma~\ref{Lemma}, Eq.~\eqref{Eq:W_Total_lowerbound}, and Refs.~\cite{Dahlsten2011,Oppenheim2002}, we obtain
\begin{eqnarray}
&&k_B T \ln{⁡d^2} - \sbr{H_{\rm min}^\epsilon (\rho) + 3 \ln{⁡\epsilon}} k_B T \ln{⁡2}\nonumber\\ 
&&> \Wt{\rho} \ge k_B T \ln{⁡d^2} - S(\rho) k_B T \ln{⁡2} \nonumber\\
&&\ge k_B T \ln⁡{\F{\rho} d^3} - S_{\rm min} (\rho) k_B T \ln{⁡2},
\end{eqnarray}
where~\cite{Dahlsten2011}
\begin{eqnarray}\label{Eq:smooth-min-entropy}
H_{\rm min}^\epsilon (\rho) \coloneqq \sup_{\norm{\eta - \rho}_B < \epsilon}⁡ ( - \log_2⁡{\norm{\eta}_\infty}) \ge - \log_2⁡{\norm{\rho}_\infty}.\quad
\end{eqnarray}
This implies
\begin{eqnarray}
&&\log_2{⁡d^2} + \log_2⁡{\norm{\rho}_\infty} - 3 \ln{\epsilon} > \frac{\Wt{\rho}}{k_B T \ln{⁡2}} \nonumber\\
&&\ge \log_2⁡{\F{\rho} d^3} - S_{\rm min} (\rho).
\end{eqnarray}
Hence, a sufficient condition for the inequality $\left| \frac{\Wt{\rho}}{k_B T \ln{⁡2}} - \sbr{\log_2⁡{\F{\rho} d^3} - S_{\rm min} (\rho)}\right| < \delta$ reads:
\begin{eqnarray}
0 < \log_2⁡{\norm{\rho}_\infty} - \sbr{\log_2⁡{\F{\rho} d} - S_{\rm min} (\rho)} + \delta_\epsilon < \delta,
\end{eqnarray}
where $\delta_\epsilon\coloneqq - 3 \ln{\epsilon} > 0$ is the best precision that we may have, since it is the lower bound of all the allowed $\delta$-values.
Note that being larger than $0$ is necessary to guarantee no contradiction.
\end{proof}

{\updatesix We are now in position to prove Theorem~\ref{Result:Estimate}.}
\begin{proof}
To begin with, note that $\Wm{\rho} \ge k_B T \ln⁡{\F{\rho} d}$ {\updatesvn for state $\rho$ with $\F{\rho}>\frac{1}{d}$} from Theorem~\ref{Result:Work_gain}, and $\Wt{\rho} \le \delta k_B T\ln{2} + k_B T \ln{⁡2} \sbr{\log_2⁡{\F{\rho} d^3} - S_{\rm min} (\rho)}$ if it is possible to choose $\pro_\rho^{\frac{1}{2}-\epsilon}$ to extract $\Wt{\rho}$ and $\log_2⁡{\norm{\rho}_\infty} =\log_2⁡{\F{\rho} d} - S_{\rm min} (\rho)$  (hence we can choose $\delta \approx \delta_\epsilon\coloneqq - 3 \ln{\epsilon}$), which is due to Theorem~\ref{Result:Estimate}.
Then direct computation shows [choose $S_{\rm min} (\rho) = S(\rho_{\rm B})$ without loss of generality]
\begin{eqnarray}\label{Eq:Chain}
0 &&\le \Wt{\rho} - \sbr{k_B T \ln{d^2} - S_{\rm min} (\rho) k_B T \ln{⁡2}  + \Wm{\rho}}\nonumber\\
&&\le  \delta k_B T\ln{2} + k_B T \ln⁡{\F{\rho}d} - \Wm{\rho} \nonumber\\
&&\le  \delta k_B T\ln{2} \approx \delta_\epsilon k_B T\ln{2},
\end{eqnarray}
{\updatesvn
where the first inequality follows from the fact that $k_B T \ln{d^2} - S_{\rm min} (\rho) k_B T \ln{⁡2}  + \Wm{\rho}$ can be interpreted as the deterministic work gain (in iid limit) of a particular work extraction process (see Fig.~\ref{Figure:Picture}), thereby being smaller than the optimal one, i.e. $\Wt{\rho}$.
}
This implies, up to the energy scale $\delta_\epsilon k_B T\ln{2}$, the desired results.
\end{proof}

\end{document}